\documentclass[conference,letterpaper]{IEEEtran}

\addtolength{\topmargin}{9mm}
\usepackage[utf8]{inputenc}
\usepackage[T1]{fontenc}
\usepackage{url}
\usepackage{ifthen}
\usepackage{cite}
\usepackage[cmex10]{amsmath}

\usepackage{graphicx}
\usepackage{amsfonts}
\usepackage{amsmath}
\usepackage{amssymb}
\usepackage{mathrsfs} %%%%%%%%%%%%%
\usepackage{color}
\usepackage{bm}
\usepackage{latexsym} %%%%%%%%%%%%%
\newtheorem{thm}{Theorem}%[section]

\newtheorem{lem}{Lemma}
\newtheorem{proof}{proof}

\newtheorem{defn}{Definition}
\newtheorem{rem}{Remark}
\newtheorem{exam}{Example}

\interdisplaylinepenalty=2500 %
\hyphenation{op-tical net-works semi-conduc-tor}
%-----------------------------------------------------------------------
\begin{document}

\title{Coded Caching with Polynomial Subpacketization}

\author{\IEEEauthorblockN{Wentu Song,
Kui Cai, and Long Shi} \IEEEauthorblockA{Science and Math Cluster,
Singapore University of Technology and Design, Singapore\\ Email:
\{wentu$\_$song, cai$\_$kui, shi$\_$long\}@sutd.edu.sg}}
\maketitle

%{\let\thefootnote\relax\footnotetext{This work is supported by
%Singapore Ministry of Education Academic Research Fund Tier 2
%MOE2016-T2-2-054, SUTD-ZJU grant ZJURP1500102, and SUTD SRG grant
%SRLS15095.}}

\begin{abstract}
Consider a centralized caching network with a single server and
$K$ users. The server has a database of $N$ files with each file
being divided into $F$ packets ($F$ is known as subpacketization),
and each user owns a local cache that can store $\frac{M}{N}$
fraction of the $N$ files. We construct a family of centralized
coded caching schemes with polynomial subpacketization.
Specifically, given $M$, $N$ and an integer $n\geq 0$, we
construct a family of coded caching schemes for any $(K,M,N)$
caching system with $F=O(K^{n+1})$. More generally, for any
$t\in\{1,2,\cdots,K-2\}$ and any integer $n$ such that $0\leq
n\leq t$, we construct a coded caching scheme with
$\frac{M}{N}=\frac{t}{K}$ and $F\leq
K\binom{\left(1-\frac{M}{N}\right)K+n}{n}$.
\end{abstract}

%\textit{A full version of this paper is accessible at:}
%\url{https://2020.ieee-isit.org/}

\section{Introduction}

A $(K,M,N)$ caching system consists of one server and $K$ users,
where all users connect to the server through a shared, error-free
link. The server has a database of $N$ files and each user may
request a specific file from the server at certain time in the
future. The user requests are random and not known by the server
in advance. Each user has a cache that can store $M/N$ fraction of
the $N$ files of the server. A \emph{centralized coded caching
scheme} operates in two separated phases: the \emph{placement
phase} and the \emph{delivery phase}. In the placement phase, the
server allocates certain packets of the data files into the cache
of the users, while in the delivery phase, the server, upon
receiving the specific demands of all users, broadcasts coded
packets through the shared link to all users so that each user can
extract its requested file from the received packets and its cache
content. The rate $R$ of the scheme is defined as the maximal
transmission amount in the delivery phase among all possible
combinations of the user demands, and the primary goal is to
design coded caching scheme with as small rate as possible.

Coded caching problem was first investigated by Maddah-Ali and
Niesen in their award-winning paper \cite{Maddah-Ali14}. The coded
caching scheme proposed in \cite{Maddah-Ali14} attains the rate
\begin{align}\label{MN-rate}R^*=\frac{K\left(1-\frac{M}{N}\right)}
{1+K\frac{M}{N}},\end{align} where $1-\frac{M}{N}$ is called the
local caching gain and $1+K\frac{M}{N}$ is called the global
caching gain, and $R^*$ was proved to be optimal among schemes
with uncoded placement \cite{Wan16,Yu17}.

A major limitation of the Maddah-Ali-Niesen scheme is the
exponential subpacketization problem: by this caching scheme, each
file is divided into $F=\binom{K}{KM/N}$ packets $(F$ is referred
to as the file size or subpacketization.$)$, which grows
exponentially with $K$ \cite{Shanmugam16}. Since high
subpacketization may result in transmission delay in practical
implementations, coded caching with low subpacketization,
especially polynomial subpacketization, is of great interest.

Many works have been engaged to reduce the subpacketization, with
the sacrifice of increasing the rate. A user-grouping method was
adopted in \cite{Shanmugam16} to reduce the subpacketization
level, and a more general concatenating construction method was
used in \cite{Cheng19-1}. A framework of constructing centralized
coded caching scheme, named placement delivery array design (or
PDA design for simplicity), was introduced in \cite{Yan17}, based
on which some new classes of coded caching schemes were obtained
in \cite{Yan17} and \cite{Cheng19}. Caching schemes constructed
using other techniques, such as hypergraphs, bipartite graphs
combinatorial designs, and projective geometries over finite
fields, are reported in \cite{Chong18}-\cite{Song19}. Most of
these schemes have exponential or subexponential subpacketization.
More interestingly, a family of coded caching schemes with linear
subpacketization $($i.e., $F=K)$, were constructed in
\cite{Shanmugam17}, using the Ruzsa-Szem$\acute{\text{e}}$redi
graphs. However, this construction is valid only for sufficiently
large $K$. Another family of linear-subpacketization schemes were
constructed in \cite{Song19} using balanced incomplete block
designs (BIBD), which exists only for some special parameters.

In this paper, we propose a family of centralized coded caching
schemes with polynomial subpacketization. Specifically, for any
$t\in\{1,2,\cdots,K-2\}$ and any integer $n$ such that $0\leq
n\leq t$, we construct a coded caching scheme for any $(K,M,N)$
caching systems with $\frac{M}{N}=\frac{t}{K}$,
$$R=\frac{m}{m-1}\frac{\sum_{i=1}^{m-1}(-1)^{i-1}\binom{m-1}{i}
\binom{K-1-i(\ell-1)}{m-2}}
{\sum_{i=1}^m(-1)^{i-1}\binom{m}{i}\binom{K-1-i(\ell-1)}{m-1}}$$
and
$$F=\frac{K}{m}\sum_{i=1}^m(-1)^{i-1}\binom{m}{i}
\binom{K-1-i(\ell-1)}{m-1},$$ where $m=K-t$ and $\ell=K-m-n+1$,
and we can prove that
$$F\leq K\binom{\left(1-\frac{M}{N}\right)K+n}{n}.$$ In particular,
given $M$, $N$ and integer $n\geq 0$, for any positive integer $K$
such that $m=K\left(1-\frac{M}{N}\right)\geq 2$ is an integer and
$2\leq m\leq K-n$, our construction gives a coded caching scheme
for any $(K,M,N)$ caching system with $F\leq
K\binom{\left(1-\frac{M}{N}\right)K+n}{n}=O(K^{n+1})$. Our
construction is based on a family of subsets of $\mathbb
Z_K=\{0,1,\cdots,K-1\}$, called $(m)_\ell$-bounded subsets of
$\mathbb Z_K$, and can be viewed as a generalization of the
construction in \cite{Maddah-Ali14}.

The rest of this paper is organized as follows. We give a formal
formulation of the centralized coded caching problem in Section
\uppercase\expandafter{\romannumeral 2}. We introduce the bounded
subsets of $\mathbb Z_K$ and discuss their properties in Section
\uppercase\expandafter{\romannumeral 3}. Our construction of coded
caching scheme is presented in Section
\uppercase\expandafter{\romannumeral 4}. Finally, the paper is
concluded in Section \uppercase\expandafter{\romannumeral 5}.

\section{Preliminaries}
For any positive integer $n$, denote $[n]:=\{1,2,\cdots,n\}$. For
any set $X$, $|X|$ is the size (cardinality) of $X$. If
$Y\subseteq X$ and $|Y|=m$, where $0\leq m\leq |X|$, we call $Y$
an $m$-subset of $ X$. We use $\binom{X}{m}$ to denote the
collection of all $m$-subsets of $X$.

%\subsection{Centralized Coded Caching Problem}

We consider a $(K,M,N)$ \emph{caching system}, where one server is
connected by $K$ users through a shared, error-free link. The
server has $N$ files, denoted by
$\textbf{W}_1,\cdots,\textbf{W}_N$, such that each file
$\textbf{W}_i\in\mathbb F^F$ for some fixed finite field $\mathbb
F$. In this paper, we assume that $\mathbb F=\mathbb F_2$, i.e.,
the binary field. Each user $k$ has a local cache memory that
allows it to store a vector $\textbf{Z}_k\in\mathbb F^{MF}$, where
$F$ is referred to as the \emph{subpacketization}.

The caching system operates in two phases: the placement phase and
the delivery phase. In the placement phase, the vector
$\textbf{Z}_k$ is computed and allocated into the cache memory of
each user $k$. In the delivery phase, each user $k$ demands a file
$\textbf{W}_{d_k}$ for some $d_k\in[N]$. The server, having been
informed of the demands of all users, computes a vector
$\textbf{X}_{\textbf{d}}\in\mathbb F^{\lfloor RF\rfloor}$ for some
fixed real number $R$ and transmits it to the users, where
$\textbf{d}=(d_0,d_1,\cdots,d_{K-1})\in[N]^K$ is called the
\emph{demand vector}. An $F$-division coded caching scheme with a
rate $R$ is specified by three sets of functions:
\begin{itemize}
 \item[(i)] (Placement Scheme) a set of caching functions
 $$\left\{\phi_k:\mathbb F^{NF}\rightarrow
 \mathbb F^{MF}\right\}_{k\in\mathbb Z_K},$$
 \item[(ii)] (Delivery Scheme) a set of encoding functions
 $$\left\{\varphi_{\textbf{d}}: \mathbb F^{NF}\rightarrow
 \mathbb F^{\lfloor RF\rfloor}\right\}_{\textbf{d}\in[N]^K},$$
 \item[(iii)] (Decoding Scheme) a set of decoding functions
 $$\left\{\mu_{k,\textbf{d}}: \mathbb F^{MF}
 \times\mathbb F^{\lfloor RF\rfloor}\rightarrow \mathbb
 F^{NF}\right\}_{k\in\mathbb Z_K,\textbf{d}\in[N]^K},$$
\end{itemize}
such that for all $k\in\mathbb Z_K$ and
$\textbf{d}=(d_0,d_1,\cdots,d_{K-1})\in[N]^K$,
$$\textbf{W}_{d_k}=\mu_{k,\textbf{d}}
(\textbf{Z}_k,\textbf{X}_{\textbf{d}}),$$ where
$\textbf{Z}_k=\phi_k(\textbf{W}_1,\cdots,\textbf{W}_N)$ and
$\textbf{X}_{\textbf{d}}=\varphi_{\textbf{d}}
(\textbf{W}_1,\cdots,\textbf{W}_N)$.

Clearly, the decoding scheme is completely determined by the
placement scheme and the delivery scheme. A caching scheme is said
to have \emph{uncoded placement} if $\textbf{Z}_k$ consists of an
exact copy of some subpackets of
$\textbf{W}_1,\cdots,\textbf{W}_N$. Otherwise, it is said to have
\emph{coded placement}.

\section{Bounded Subsets of $\mathbb Z_K$}

In this section, we always assume that $K,m,\ell$ are positive
integers such that $K\geq 2$, $m\leq K$ and $\ell\leq K-m+1$.
Denote $\mathbb Z_K=\{0,1,\cdots,K-1\}$. A family of subsets of
$\mathbb Z_K$, referred to as $(m)_{\ell}$-bounded subsets of
$\mathbb Z_K$, is introduced, which will be used, in the next
section, to construct coded caching schemes with polynomial
subpacketization.

We first give a different representation of the $m$-subsets of
$\mathbb Z_K$. Denote
\begin{align}\label{defeq-VKt}
\nonumber V_K(m)=&\Big\{\Big.(k,a_1,\cdots\!,a_m)\!\in\!\mathbb
Z^{m+1}\!: 0\!\leq \!k\!\leq\! K\!-\!1, a_i\!\geq\! 1\\
&~~\text{for all}~ i\!\in\![m], ~\text{and}~
\sum_{i=1}^m\!a_i\!=\!K\Big.\Big\}.\end{align} For each
$\textbf{v}=(k,a_1,\cdots,a_m)\in V_K(m)$, let
\begin{align}\label{defeq-f}
f(\textbf{v})=\Bigg\{k+\sum_{j=1}^{i-1}a_j~(\text{mod}~ K):
i\in[m-1]\Bigg\}.\end{align} Clearly, $f(\textbf{v})$ is an
$m$-subset of $\mathbb Z_K$, and from \eqref{defeq-f}, we obtain a
mapping $f: V_K(m)\rightarrow\binom{\mathbb Z_K}{m}$. Hence, each
$\textbf{v}\in V_K(m)$ can be used to represent an $m$-subset of
$\mathbb Z_K$.

As an example, consider $K=20$ and $m=5$. Suppose
$\textbf{v}=(12,3,2,6,7,2)$. Then we have $\textbf{v}\in
V_{20}(5)$. By \eqref{defeq-f}, we can obtain
$f(\textbf{v})=\{12,15,17,3,10\}\in\binom{\mathbb Z_{20}}{5}$.

\begin{lem}\label{Prt-f}
Let $f$ be the mapping defined according to \eqref{defeq-f}.
\begin{itemize}
 \item[1)] $f$ is surjective.
 \item[2)] If $A$ is an $m$-subset of $\mathbb Z_K$, then $|f^{-1}(A)|=m$
 and $f^{-1}(A)$ is of the form
 $$f^{-1}(A)=\Big\{\Big.\left(k,a_1^{(A,k)},a_2^{(A,k)},\cdots,
 a_{m}^{(A,k)}\right): k\in A\Big\}\Big.,$$
 where $\left(a_1^{(A,k)},a_2^{(A,k)},\cdots,a_{m}^{(A,k)}\right)$ is
 uniquely determined by $A$ and $k$. Moreover, if $k$
 and $k'$ are two distinct elements of $A$, then
 $\left(a_1^{(A,k')},a_2^{(A,k')},\cdots,a_{m}^{(A,k')}\right)$
 is a circular shift of
 $\left(a_1^{(A,k)},a_2^{(A,k)},\cdots,a_{m}^{(A,k)}\right)$.
\end{itemize}
\end{lem}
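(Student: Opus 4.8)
The plan is to analyze the map $f$ by understanding what data a pair $(k, a_1, \ldots, a_m) \in V_K(m)$ encodes. Given such a tuple, the set $f(\mathbf{v}) = \{k, k+a_1, k+a_1+a_2, \ldots, k+a_1+\cdots+a_{m-1}\} \pmod K$ is obtained by starting at $k$ and taking successive partial sums of the $a_i$'s. Since each $a_i \geq 1$ and $\sum_{i=1}^m a_i = K$, these $m$ partial sums (including the empty sum, which gives $k$ itself) are $m$ distinct residues mod $K$: walking around the cycle $\mathbb{Z}_K$ in steps of size $a_1, a_2, \ldots, a_{m-1}$ lands on $m$ distinct points, and the final step $a_m$ returns to $k$. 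So $f(\mathbf{v})$ is genuinely an $m$-subset, confirming $f$ is well-defined into $\binom{\mathbb{Z}_K}{m}$.

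For surjectivity (part 1), I would take an arbitrary $m$-subset $A = \{b_1, b_2, \ldots, b_m\}$ and, fixing any starting element $k = b_{j_0} \in A$, list the remaining elements of $A$ in cyclic order starting just after $k$: say $k = c_0, c_1, \ldots, c_{m-1}$ are the elements of $A$ arranged so that going around $\mathbb{Z}_K$ one meets them in this cyclic order. Then set $a_i = (c_i - c_{i-1}) \bmod K$ for $i \in [m-1]$ and $a_m = (c_0 - c_{m-1}) \bmod K$; because we never "skip past" $k$, each $a_i \geq 1$ and the telescoping sum gives $\sum a_i = K$. By construction $f(k, a_1, \ldots, a_m) = A$, so $f$ is surjective.

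For part 2, the key observation is that a preimage of $A$ is determined by choosing which element of $A$ plays the role of $k$: given $k \in A$, the cyclic order of $A$ forces the gaps $a_1^{(A,k)}, \ldots, a_m^{(A,k)}$ uniquely by the construction just described, so $|f^{-1}(A)| = m$ with preimages indexed by $k \in A$. For the circular-shift claim, I would note that changing the base point from $k$ to $k'$ (another element of $A$) just re-reads the same cyclic sequence of gaps around the cycle starting from a different position: the multiset of gaps $\{a_1, \ldots, a_m\}$ is intrinsic to $A$, and relocating the start point cyclically permutes the sequence. Concretely, if $k' = c_r$ in the cyclic listing based at $k$, then the gap sequence based at $k'$ is $(a_{r+1}, a_{r+2}, \ldots, a_m, a_1, \ldots, a_r)$, an $r$-fold circular shift.

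The main obstacle, and the step requiring the most care, is the bookkeeping around the modular arithmetic and the cyclic ordering: one must verify rigorously that the partial sums are distinct (so $f(\mathbf{v})$ really has $m$ elements, not fewer), that the reconstruction of gaps from a base point is forced and yields strictly positive values summing to exactly $K$, and that these two operations are mutually inverse. None of this is deep, but it is the kind of argument where an off-by-one error in the indexing (e.g., whether $a_m$ or $a_1$ "wraps around") can silently break the claim, so I would set up notation for the cyclic successor function on $A$ once and use it consistently throughout.
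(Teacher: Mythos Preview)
Your proposal is correct and follows the same overall architecture as the paper: build explicit preimages $\mathbf{v}_{A,k}$ by reading off cyclic gaps of $A$ starting at each $k\in A$, check these lie in $V_K(m)$ and map to $A$, and observe that changing the base point cyclically shifts the gap sequence.

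The one substantive difference is in how you establish $|f^{-1}(A)|=m$. You argue \emph{directly} that any preimage must have first coordinate $k\in A$ (since $k\in f(\mathbf{v})$ always) and that, once $k$ is fixed, the strictly increasing partial sums $0<a_1<a_1+a_2<\cdots<K$ force the $a_i$'s to be the consecutive cyclic gaps of $A$, so the preimage is unique. The paper instead uses a global counting argument: having exhibited $m$ preimages for every $A$, it notes $|V_K(m)|=K\binom{K-1}{m-1}=m\binom{K}{m}$, so if any fiber had more than $m$ elements the total would exceed $|V_K(m)|$, a contradiction. Your route is more self-contained and avoids the auxiliary count; the paper's route sidesteps the careful verification that the gap reconstruction is forced, at the price of importing the compositions-of-$K$ count. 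Either works, and your cautionary note about the modular bookkeeping is well placed.
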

\begin{proof}
1) Suppose $A=\{k_1,k_2\cdots,k_{m}\}$ such that
$k_1<k_2<\cdots<k_m$. For each $k=k_{i_0}\in A$, $i_0\in[m]$, let
\begin{equation}\label{eq1-A-to-v}
a_i^{(A,k)}\!=\!\left\{\begin{aligned}
&\!k_{i_0+i}\!-\!k_{i_0+i-1},
{\footnotesize ~}~~~~~~~\text{for}~1\!\leq\! i\!\leq\! m\!-\!i_0,\\
&\!K\!+\!k_{1}\!-\!k_{m},
{\footnotesize ~~~~}~~~~~~~~\text{for}~i\!=\!m\!-\!i_0\!+\!1~,\\
&\!k_{i_0+i-m}\!-\!k_{i_0+i-m-1},\text{for}~m\!-\!i_0\!+\!1\!<\!i\!\leq\!m,\\
\end{aligned} \right.
\end{equation}
and let
\begin{align}\label{eq2-A-to-v}
\textbf{v}_{A,k}=\left(k,a_1^{(A,k)},a_2^{(A,k)},\cdots,
a_m^{(A,k)}\right).\end{align} It is a mechanical work to verify
that $\textbf{v}_{A,k}\in V_K(m)$ and $f(\textbf{v}_{A,k})\!=\!A$,
so $f$ is surjective and $\left\{\textbf{v}_{A,k}\!: k\!\in\!
A\right\}\!\subseteq\! f^{-1}(A)$.

2) According to \eqref{eq1-A-to-v},
$\left(a_1^{(A,k)},a_2^{(A,k)},\cdots,a_m^{(A,k)}\right)$ is
uniquely determined by $A$ and $k$. Moreover, if
$i=i_0+1~(\text{mod}~ m)$ and $k'=k_{i}$, then by
\eqref{eq1-A-to-v}, we can find that
$\left(a_1^{(A,k')},a_2^{(A,k')},\cdots,a_{m}^{(A,k')}\right)
=\left(a_2^{(A,k)},\cdots,a_{m}^{(A,k)},a_{1}^{(A,k)}\right)$ is a
circular shift of
$\left(a_1^{(A,k)},a_2^{(A,k)},\cdots,a_m^{(A,k)}\right)$. Hence,
by induction, for any $k'\in A\backslash\{k\}$,
$\left(a_1^{(A,k')},a_2^{(A,k')},\cdots,a_{m}^{(A,k')}\right)$ is
a circular shift of
$\left(a_1^{(A,k)},a_2^{(A,k)},\cdots,a_{m}^{(A,k)}\right)$.

We now prove that $\left\{\textbf{v}_{A,k}\!: k\!\in\!
A\right\}=f^{-1}(A)$ for all $A\in\binom{\mathbb Z_K}{m}$, where
$\textbf{v}_{A,k}$ is defined by \eqref{eq2-A-to-v}. Since we have
proved $\left\{\textbf{v}_{A,k}\!: k\!\in\! A\right\}\!\subseteq\!
f^{-1}(A)$ and $|\left\{\textbf{v}_{A,k}\!: k\!\in\!
A\right\}|=|A|=m$, it suffices to prove that $|f^{-1}(A)|=m$ for
all $A\in\binom{\mathbb Z_K}{m}$. We can prove this by
contradiction. Suppose $|f^{-1}(A)|>m$ for some
$A\in\binom{\mathbb Z_K}{m}$. Since by 1), $f$ is surjective, then
we have
\begin{align}\label{eq1-Prt-f}
|V_{K}(m)|=\left|\bigcup_{A\in\binom{\mathbb
Z_K}{m}}f^{-1}(A)\right|> m\binom{K}{m}.\end{align} On the other
hand, the number of integer solutions to the equation
$a_1+\cdots+a_m=K$ under the condition that $a_i\geq 1$ for all
$i=1,\cdots,n$, is $\binom{K-1}{m-1}~($e.g., see Chapter 1 of
\cite{Jukna}$)$. So by \eqref{defeq-VKt}, we have
\begin{align*}
|V_{K}(m)|=K\binom{K-1}{m-1}=m\binom{K}{m},\end{align*} which
contradicts to \eqref{eq1-Prt-f}, so it must be the case that
$|f^{-1}(A)|=m$ for all $A\in\binom{\mathbb Z_K}{m}$, and hence,
we have $f^{-1}(A)=\left\{\textbf{v}_{A,k}\!: k\!\in\! A\right\}$
for all $A\in\binom{\mathbb Z_K}{m}$.
\end{proof}

\begin{exam}\label{exm-f}
Let $K\!=\!20$, $m\!=\!5$ and $A\!=\!\{2,3,11,15,19\}$. By
\eqref{eq1-A-to-v}, we have $a_1^{(A,2)}=3-2=1$,
$a_2^{(A,2)}=11-3=8$, $a_3^{(A,2)}=15-11=4$,
$a_4^{(A,2)}=19-15=4$, and $a_5^{(A,2)}=2+20-19=3$. So by
\eqref{eq2-A-to-v}, $\textbf{v}_{A,2}=(2,1,8,4,4,3)$. Similarly,
$\textbf{v}_{A,3}=(3,8,4,4,3,1)$,
$\textbf{v}_{A,11}=(11,4,4,3,1,8)$,
$\textbf{v}_{A,15}=(15,4,3,1,8,4)$ and
$\textbf{v}_{A,19}=(3,3,1,8,4,4)$. By Lemma \ref{Prt-f}, we obtain
$f^{-1}(A)=\{\textbf{v}_{A,2}, \textbf{v}_{A,3},
\textbf{v}_{A,11}, \textbf{v}_{A,15}, \textbf{v}_{A,19}\}$.
Clearly, $(4,4,3,1,8)$ is a circular shift of $(8,4,4,3,1)$. In
fact, for any distinct $k,k'\in A$,
$\left(a_1^{(A,k')},a_2^{(A,k')},\cdots,a_{5}^{(A,k')}\right)$ is
a circular shift of
$\left(a_1^{(A,k)},a_2^{(A,k)},\cdots,a_{5}^{(A,k)}\right)$.\end{exam}

By Lemma \ref{Prt-f}, each $m$-subset $A$ of $\mathbb Z_K$ can be
represented by a subset $f^{-1}(A)$ of $V_K(m)$. Now, we can
introduce the concept of $(m)_{\ell}$-bounded subset of $\mathbb
Z_K$. Denote
\begin{align}\label{defeq-VKlt} \nonumber
V_{K,\ell}(m)=&\Big\{\Big.(k,a_1,\cdots\!,a_m)\in V_K(m): a_i\geq
\ell \\&~\text{for some}~ i\in[m]\Big.\Big\}.\end{align}
\begin{defn}\label{def-BNDS}
An $m$-subset $A$ of $\mathbb Z_K$ is called an
$(m)_{\ell}$-\emph{bounded subset} of $\mathbb Z_K$ if
$f^{-1}(A)\cap V_{K,\ell}(m)\neq\emptyset$. Let $\mathcal
B_{K,\ell}(m)$ denote the collection of all $(m)_{\ell}$-bounded
subsets of $\mathbb Z_K$.\end{defn}

%In the rest of this paper, the collection of all
%$(m)_{\ell}$-bounded subsets of $\mathbb Z_K$ will be denoted by
%$\mathcal B_{K,\ell}(m)$.

\begin{rem}\label{rem-BNDS}
We point out two simple facts about the $(m)_{\ell}$-bounded
subset of $\mathbb Z_K$.
\begin{itemize}
 \item[1)] For any $A\in\binom{\mathbb Z_K}{m}$, if $f^{-1}(A)\cap
 V_{K,\ell}(m)\neq\emptyset$, then $f^{-1}(A)\subseteq
 V_{K,\ell}(m)$. Hence, $A$ is an $(m)_{\ell}$-bounded subset of
 $\mathbb Z_K$ if and only if $f^{-1}(A)\subseteq
 V_{K,\ell}(m)$. In fact, for
 any distinct $k,k'\in A$, by Lemma \ref{Prt-f},
 $\left(a_1^{(A,k')},a_2^{(A,k')},\cdots,a_{m}^{(A,k')}\right)$ is
 a circular shift of
 $\left(a_1^{(A,k)},a_2^{(A,k)},\cdots,a_{m}^{(A,k)}\right)$,
 so by \eqref{defeq-VKlt}, if
 $\textbf{v}_{A,k}\in V_{K,\ell}(m)$, then $\textbf{v}_{A,k'}\in
 V_{K,\ell}(m)$. In other words, if $f^{-1}(A)\cap
 V_{K,\ell}(m)\neq\emptyset$, then $f^{-1}(A)\subseteq
 V_{K,\ell}(m)$.
 \item[2)] If $\ell<\frac{K}{m}+1$, then $\mathcal
 B_{K,\ell}(m)=\binom{\mathbb Z_K}{m}$. This can be proved as follows.
 For any
 $A\in\binom{\mathbb Z_K}{m}$ and $(k,a_1,\cdots,a_m)\in
 f^{-1}(A)$, we always have $a_i\geq \ell$ for some $i\in[m]$.
 $($Otherwise we can obtain $\sum_{i=1}ta_i\leq
 m(\ell-1)<m\frac{K}{m}=K$, which contradicts to
 \eqref{defeq-VKt}.$)$ Hence, by \eqref{defeq-VKlt},
 $A\in\mathcal B_{K,\ell}(m)$. Since $A\in\binom{\mathbb Z_K}{m}$
 is arbitrary, then we have $\mathcal B_{K,\ell}(m)=\binom{\mathbb
 Z_K}{m}$.
\end{itemize}
\end{rem}

Let's reconsider Example \ref{exm-f}. We can verify that
$f^{-1}(A)\subseteq V_{20,8}(5)$, where $A=\{2,3,11,15,19\}$, so
$A$ is a $(5)_8$-bounded subset of $\mathbb Z_{20}$. We can
further consider the $4$-subset $B=\{2,3,11,19\}$ of $A$. By
\eqref{eq1-A-to-v} and \eqref{eq2-A-to-v}, we have
$\textbf{v}_{B,2}=(2,1,8,8,3)\in f^{-1}(B)$. By \eqref{eq1-Prt-f},
$\textbf{v}_{B,2}\in V_{20,8}(4)$, so $B$ is a $(4)_8$-bounded
subset of $\mathbb Z_{20}$. What is interesting in this example is
that $a_1^{(B,2)}=a_1^{(A,2)}$, $a_2^{(B,2)}=a_2^{(A,2)}$,
$a_3^{(B,2)}=a_3^{(A,2)}+a_4^{(A,2)}$ and
$a_4^{(B,2)}=a_5^{(A,2)}$. In fact, this holds for all $t$-subset
$A$ of $\mathbb Z_K$ and all $(t-1)$-subset $B$ of $A$. In
general, we have the following lemma.

\begin{lem}\label{Subset-BS}
Suppose $2\leq m\leq K$ and $A$ is an $(m)_{\ell}$-bounded subset
of $\mathbb Z_K$. Then any $(m-1)$-subset of $A$ is an
$(m-1)_{\ell}$-bounded subset of $\mathbb Z_K$.
\end{lem}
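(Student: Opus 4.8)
Proof proposal for Lemma \ref{Subset-BS}.

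The plan is to recast $(m)_\ell$-boundedness as a statement about the \emph{cyclic gaps} of a subset, where by the cyclic gaps of $A=\{k_1<k_2<\cdots<k_m\}\in\binom{\mathbb Z_K}{m}$ I mean the numbers $k_{i+1}-k_i$ ($1\leq i\leq m-1$) together with the wraparound difference $K+k_1-k_m$. By \eqref{eq1-A-to-v} and \eqref{eq2-A-to-v}, for every $k\in A$ the coordinates $a_1^{(A,k)},\dots,a_m^{(A,k)}$ of $\textbf{v}_{A,k}$ are exactly these cyclic gaps, listed starting from $k$; in particular the multiset $\{a_i^{(A,k)}:i\in[m]\}$ equals the multiset of cyclic gaps of $A$ and does not depend on the choice of $k$. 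Combining this with Definition \ref{def-BNDS}, Lemma \ref{Prt-f}, Remark \ref{rem-BNDS}\,1) and \eqref{defeq-VKlt}, I obtain the criterion
$$A\in\mathcal B_{K,\ell}(m)\iff \max_{i\in[m]}a_i^{(A,k)}\geq\ell\ \text{for one (equivalently every)}\ k\in A,$$
i.e.\ $A$ is $(m)_\ell$-bounded precisely when at least one cyclic gap of $A$ is $\geq\ell$.

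Next I would compare the cyclic gaps of $A$ with those of an arbitrary $(m-1)$-subset $B$ of $A$. Writing $B=A\setminus\{c\}$ with $c\in A$, deleting $c$ leaves every cyclic gap of $A$ unchanged except for the two gaps immediately adjacent to $c$ in the cyclic order, which merge into a single gap equal to their sum; this is exactly the behaviour illustrated just before the lemma, where $a_3^{(B,2)}=a_3^{(A,2)}+a_4^{(A,2)}$ while the remaining coordinates are copied over. Hence the multiset of cyclic gaps of $B$ is obtained from that of $A$ by replacing two positive entries $a,a'$ by $a+a'$, and since $a+a'\geq\max\{a,a'\}$, the largest cyclic gap of $B$ is at least the largest cyclic gap of $A$.

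Putting the two pieces together finishes the argument: if $A\in\mathcal B_{K,\ell}(m)$ then some cyclic gap of $A$ is $\geq\ell$, hence some cyclic gap of $B$ is $\geq\ell$; and since $m\geq 2$ gives $m-1\geq 1$ and $\ell\leq K-m+1\leq K-(m-1)+1$, the standing hypotheses of the section hold with $m$ replaced by $m-1$, so the criterion yields $B\in\mathcal B_{K,\ell}(m-1)$. As $B$ was an arbitrary $(m-1)$-subset of $A$, the lemma follows. The only point that needs care is the bookkeeping when $c$ is the smallest or largest element of $A$ in the order $k_1<\cdots<k_m$, so that one of the two merged gaps is the wraparound difference $K+k_1-k_m$; choosing the base point $k\in B$ away from the neighbours of $c$ — or simply working throughout with the rotation-invariant multiset of gaps rather than a fixed listing — turns this into a routine case check rather than a genuine obstacle.
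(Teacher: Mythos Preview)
Your proposal is correct and follows essentially the same approach as the paper: both arguments observe that the coordinates $a_i^{(A,k)}$ are precisely the cyclic gaps of $A$, that removing one element merges two adjacent gaps into their sum, and hence the maximum gap cannot decrease. The paper simply carries this out with a fixed base point $k_{i_1}$ (the cyclic successor of the deleted element), so that the merged gap sits in the last two coordinates, whereas you phrase the same observation in terms of the rotation-invariant multiset of gaps.
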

\begin{proof}
Suppose $A=\{k_1,k_2\cdots,k_{m}\}$ such that $0\leq
k_1<k_2<\cdots<k_m\leq K-1$, and $B=A\backslash\{k_{i_0}\}$, where
$i_0\in[m]$. Let $i_1=i_0+1 ~(\text{mod}~m)$. By
\eqref{eq1-A-to-v} and \eqref{eq2-A-to-v}, we can verify that
$\textbf{v}_{B,k_{i_1}}=\left(k_{i_1},a_1^{(B,k_{i_1})},\cdots,
a_{m-1}^{(B,k_{i_1})}\right)=\left(k_{i_1},a_1^{(A,k_{i_1})},\cdots,
a_{m-2}^{(A,k_{i_1})},
a_{m-1}^{(A,k_{i_1})}+a_{m}^{(A,k_{i_1})}\right)$.

By 2) of Lemma \ref{Prt-f}, $\textbf{v}_{A,k_{i_1}}\in f^{-1}(A)$.
Since $A$ is a $(m)_{\ell}$-bounded subset of $\mathbb Z_K$, we
have $a_{i}^{(A,k_{i_1})}\geq\ell$ for some $i\in[m]$, and so
$a_{i'}^{(B,k_{i_1})}\geq\ell$ for some $i'\in[m-1]$. By
\eqref{defeq-VKlt}, we have $\textbf{v}_{B,k_{i_1}}\in
V_{K,\ell}(m-1)$. Moreover, by 2) of Lemma \ref{Prt-f},
$\textbf{v}\in f^{-1}(B)$, and so $f^{-1}(B)\cap
V_{K,\ell}(m-1)\neq\emptyset$. Hence, $B$ is an
$(m-1)_{\ell}$-bounded subset of $\mathbb Z_K$.
\end{proof}

The following lemma counts the number of $(m)_{\ell}$-bounded
subsets of $\mathbb Z_K$.

\begin{lem}\label{P-in-BS}
Suppose $K,m,\ell$ are positive integers such that $K\geq 2$,
$m\leq K$ and $\ell\leq K-m+1$. We have
\begin{itemize}
 \item[1)] For each $k\in\mathbb Z_K$, the
 number of $(m)_{\ell}$-bounded subsets of $\mathbb Z_K$
 containing $k$, denoted by $C(K,m,\ell)$, is independent of $k$,
 and we have
 $$C(K,m,\ell)=\sum_{i=1}^m(-1)^{i-1}\binom{m}{i}\binom{K-1-i(\ell-1)}{m-1}.$$
 \item[2)] The number of $(m)_{\ell}$-bounded subsets of
 $\mathbb Z_K$ is
 $$|\mathcal B_{K,\ell}(m)|=
 \frac{K}{m}\sum_{i=1}^m(-1)^{i-1}\binom{m}{i}\binom{K-1-i(\ell-1)}{m-1}.$$
 \item[3)] The number of $(m)_{\ell}$-bounded subsets of
 $\mathbb Z_K$ satisfies
 \begin{align}\label{eq-Est-BKLM}|\mathcal B_{K,\ell}(m)|\leq
 K\binom{K-\ell+1}{m}.\end{align}
\end{itemize}
\end{lem}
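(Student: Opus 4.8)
The plan is to translate everything into counting compositions of $K$ into $m$ positive parts, exploiting the dictionary established in Lemma \ref{Prt-f} and Remark \ref{rem-BNDS}.

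For part 1), fix $k\in\mathbb Z_K$. By Lemma \ref{Prt-f}, for every $A\in\binom{\mathbb Z_K}{m}$ with $k\in A$ the fibre $f^{-1}(A)$ contains exactly one vector with first coordinate $k$, namely $\mathbf v_{A,k}=(k,a_1^{(A,k)},\dots,a_m^{(A,k)})$, and $f(\mathbf v_{A,k})=A$. Hence $A\mapsto(a_1^{(A,k)},\dots,a_m^{(A,k)})$ is a bijection from $\{A\in\binom{\mathbb Z_K}{m}:k\in A\}$ onto the set of compositions $(a_1,\dots,a_m)$ of $K$ into $m$ positive parts (equivalently, the elements of $V_K(m)$ whose first coordinate is $k$). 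By Remark \ref{rem-BNDS}, $A$ is $(m)_\ell$-bounded iff $\mathbf v_{A,k}\in V_{K,\ell}(m)$, i.e.\ iff the associated composition has some part $\ge\ell$. So $C(K,m,\ell)$ equals the number of compositions of $K$ into $m$ positive parts having at least one part $\ge\ell$; in particular this count is independent of $k$. To evaluate it I would apply inclusion--exclusion to the sets $\mathcal A_i=\{(a_1,\dots,a_m):a_j\ge1\ \forall j,\ \sum_j a_j=K,\ a_i\ge\ell\}$: for $I\subseteq[m]$ with $|I|=s$, the shift $a_i\mapsto a_i-(\ell-1)$ for $i\in I$ gives $\bigl|\bigcap_{i\in I}\mathcal A_i\bigr|=\binom{K-1-s(\ell-1)}{m-1}$, with the usual convention that this is $0$ once $K-1-s(\ell-1)<m-1$. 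Thus $C(K,m,\ell)=\bigl|\bigcup_{i=1}^m\mathcal A_i\bigr|=\sum_{s=1}^m(-1)^{s-1}\binom{m}{s}\binom{K-1-s(\ell-1)}{m-1}$.

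Part 2) I would obtain by double counting the pairs $(k,A)$ with $k\in A\in\mathcal B_{K,\ell}(m)$: counting by $k$ first yields $K\cdot C(K,m,\ell)$ via part 1), counting by $A$ first yields $m\cdot|\mathcal B_{K,\ell}(m)|$, whence $|\mathcal B_{K,\ell}(m)|=\tfrac{K}{m}C(K,m,\ell)$. For part 3), a union bound on the same sets suffices: $C(K,m,\ell)=\bigl|\bigcup_{i=1}^m\mathcal A_i\bigr|\le\sum_{i=1}^m|\mathcal A_i|=m\binom{K-\ell}{m-1}$, so $|\mathcal B_{K,\ell}(m)|\le K\binom{K-\ell}{m-1}\le K\binom{K-\ell+1}{m}$, the last inequality by Pascal's rule $\binom{K-\ell+1}{m}=\binom{K-\ell}{m}+\binom{K-\ell}{m-1}$.

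The only real work is the bookkeeping in the inclusion--exclusion step: one must check that the bijection in part 1) is precisely the content of Lemma \ref{Prt-f}, and that the binomial-coefficient convention cleanly absorbs the degenerate terms with $K-1-s(\ell-1)<m-1$. Once part 1) is in place, parts 2) and 3) are immediate.
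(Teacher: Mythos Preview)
Your argument is correct. Parts 1) and 2) match the paper's approach almost exactly: the paper also reduces to counting compositions of $K$ into $m$ positive parts via Lemma~\ref{Prt-f}, and then double-counts the pairs $(k,A)$. The only cosmetic difference in part 1) is that the paper counts the complement (compositions with \emph{all} parts $\le\ell-1$) and cites an external formula for that number, whereas you apply inclusion--exclusion directly to ``at least one part $\ge\ell$''; the two computations are dual and yield the same alternating sum.

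Part 3) is where you genuinely diverge. The paper proves the bound geometrically: it shows that every $(m)_\ell$-bounded subset lies inside some cyclic arc $X_{(k,\ell)}=\{k,k\oplus_K1,\dots,k\oplus_K(K-\ell)\}$ of size $K-\ell+1$, and then covers $\mathcal B_{K,\ell}(m)$ by the $K$ families $\binom{X_{(k,\ell)}}{m}$. Your route is purely analytic: you simply drop all but the $s=1$ term of your inclusion--exclusion, obtaining $C(K,m,\ell)\le m\binom{K-\ell}{m-1}$ and hence $|\mathcal B_{K,\ell}(m)|\le K\binom{K-\ell}{m-1}$, which is in fact slightly \emph{stronger} than~\eqref{eq-Est-BKLM} by Pascal's rule. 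Your approach is shorter and sharper; the paper's approach has the advantage of giving a concrete structural description (containment in an arc) that may be useful elsewhere.
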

\begin{proof}
1) For $k\in \mathbb Z_K$, let $\mathcal S_K(k)$ denote the
collection of all $m$-subsets of $\mathbb Z_K$ containing $k$.
Clearly, $$|\mathcal S_K(k)|=\binom{K-1}{m-1}.$$ Let $\mathcal
T_K(k)$ denote the collection of all $m$-subsets of $\mathbb Z_K$
that contain $k$ but are not an $(m)_{\ell}$-bounded subsets of
$\mathbb Z_K$.  We now compute $|\mathcal T_K(k)|$. If
$A\in\mathcal T_K(k)$, by 1) of Remark \ref{rem-BNDS},
$\left(k,a_1^{(A,k)},\cdots, a_m^{(A,k)}\right)\notin
V_{K,\ell}(m)$, so we obtain an $m$-tuple
$\left(a_1^{(A,k)},\cdots, a_m^{(A,k)}\right)\in\mathbb Z_K^m$
satisfying $\sum_{i=1}^ma_i^{(A,k)}=K$ and $1\leq a_i^{(A,k)}\leq
\ell-1$ for all $i\in[m]$. Conversely, for any $m$-tuple
$\left(a_1,\cdots, a_m\right)\in\mathbb Z_K^m$ satisfying
$\sum_{i=1}^ma_i=K$ and $1\leq a_i\leq \ell-1$ for all $i\in[m]$,
by \eqref{defeq-f}, we have $f(\textbf{v})\in\mathcal T_K(k)$,
where $\textbf{v}=(k,a_1,\cdots,a_m)\in V_{K}(m)$. Hence,
$|\mathcal T_K(k)|$ equals to the number of $m$-tuples
$(a_1,\cdots,a_m)\in\mathbb Z_K^m$ satisfying $\sum_{i=1}^ma_i=K$
and $1\leq a_i\leq \ell-1$ for all $i\in[m]$. By letting
$x_i=a_i-1$ for each $i\in[m]$, we can further show that
$|\mathcal T_K(k)|=\omega_{m,\ell-1}(K-m)$, where
$\omega_{m,\ell-1}(K-m)$ denotes the number of $m$-tuples
$(x_1,\cdots,x_m)\in\mathbb Z_K^m$ satisfying
$\sum_{i=1}^mx_i=K-m$ and $0\leq x_i<\ell-1$ for all $i\in[m]$. By
\cite[Lemma 1.1]{Ratsaby},
$\omega_{m,\ell-1}(K-m)=\sum_{i=0}^m(-1)^i
\binom{m}{i}\binom{K-1-i(\ell-1)}{m-1}$, so we have
\begin{align*}|\mathcal
T_K(k)|=\sum_{i=0}^m(-1)^i
\binom{m}{i}\binom{K-1-i(\ell-1)}{m-1}.\end{align*} Thus, the
number of $(m)_{\ell}$-bounded subsets of $\mathbb Z_K$ containing
$k$ equals to
\begin{align*}
&~|\mathcal S_K(k)\backslash\mathcal T_K(k)|\\&=|\mathcal
S_K(k)|-|\mathcal T_K(k)|\\&=\binom{K-1}{m-1}-\sum_{i=0}^m(-1)^i
\binom{m}{i}\binom{K-1-i(\ell-1)}{m-1}\\
&=\sum_{i=1}^m(-1)^{i-1}\binom{m}{i}\binom{K-1-i(\ell-1)}{m-1},
\end{align*} which proves claim 1).

2) By claim 1), for each $k\in\mathbb Z_K$, the set of
$(m)_{\ell}$-bounded subsets of $\mathbb Z_K$ containing $k$ is
$C(K,m,\ell)=\sum_{i=1}^m(-1)^{i-1}\binom{m}{i}
\binom{K-1-i(\ell-1)}{m-1}$, which is independent of $k$. On the
other hand, by Definition \ref{def-BNDS}, each
$(m)_{\ell}$-bounded subset of $\mathbb Z_K$ is an $m$-subset of
$\mathbb Z_K$. Then by counting the $1$s in the incidence matrix
of $\mathcal B_{K,\ell}(m)$, we have
$$KC(K,m,\ell)=|\mathcal B_{K,\ell}(m)|m.$$ Thus, the total number
of $(m)_{\ell}$-bounded subsets of $\mathbb Z_K$ is
\begin{align*}
|\mathcal B_{K,\ell}(m)|&=\frac{KC(K,m,\ell)}{m}\\
&=\frac{K}{m}\sum_{i=1}^m(-1)^{i-1}\binom{m}{i}
\binom{K-1-i(\ell-1)}{m-1},\end{align*} which proves 2).

3) For each $k\in\mathbb Z_K$, denote
$$X_{(k,\ell)}=\{k,k\oplus_K1,\cdots,k\oplus_K(K-\ell),$$ where
$k\oplus_Ki=k+i~(\text{mod}~K)$ for any $i\in[K-\ell]$. Note that
$|X_{(k,\ell)}|=K-\ell+1$. We are to prove that if $A$ is an
$(m)_{\ell}$-bounded subset of $\mathbb Z_K$, then
$A\in\binom{X_{(k,\ell)}}{m}$ for some $k\in\mathbb Z_K$. In fact,
suppose $A=\{k_1,\cdots,k_m\}$ such that $0\leq k_1<\cdots<k_m\leq
K-1$. Since $A$ is an $(m)_{\ell}$-bounded subset of $\mathbb
Z_K$, by 1) of Remark \ref{rem-BNDS},
$\textbf{v}_{A,k_j}=\left(k_j,a_1^{(A,k_j)},\cdots,
a_m^{(A,k_j)}\right)\in V_{K,\ell}(m)$ for all $j\in[m]$, so
$a_{i}^{(A,k_j)}\geq \ell$ for some $i\in[m]$. Then by
\eqref{eq1-A-to-v}, $k_{i'+1}-k_{i'}\geq\ell$ for some
$i'\in[m]~($If $i'=m$, then $k_{1}+K-k_{m}\geq\ell.)$, and so we
have $A\subseteq X_{(k_{i'+1},\ell)}~($see Example
\ref{exm-Xm-set} for an illustration$)$. Thus, we have $\mathcal
B_{K,\ell}(m)\subseteq\bigcup_{k\in\mathbb
Z_K}\binom{X_{(k,\ell)}}{m}$, and so
\begin{align*}
|\mathcal B_{K,\ell}(m)|\leq\sum_{k\in\mathbb
Z_K}\left|\binom{X_{(k,\ell)}}{m}\right|=K\binom{K-\ell+1}{m},
\end{align*} which proves 3).
\end{proof}

\begin{exam}\label{exm-Xm-set}
Suppose $K=20$, $m=5$ and $\ell=8$. Let
$A=\{k_1,k_2,k_3,k_4,k_5\}=\{1,4,13,14,18\}$, where $k_1=1$,
$k_2=4$, $k_3=13$, $k_4=14$ and $k_5=18$. By \eqref{eq1-A-to-v},
we can obtain $\textbf{v}_{A,1}=(1,3,9,1,4,3)$, so $A\in\mathcal
B_{20,9}(5)$. Note that by \eqref{eq1-A-to-v},
$a_5^{(A,2)}=9=k_3-k_2$, and we can verify that $A\subseteq
X_{(k_3,\ell)}=\{k_3,k_3\oplus_K1,\cdots,k_3\oplus_K(K-\ell)\}=
\{13,14,\cdots,19,1,2,3,4\}$.
\end{exam}

\section{Coded Caching With Polynomial Subpacketization}

In this section, we construct a family of coded caching schemes
using the $(m)_{\ell}$-bounded subsets of $\mathbb Z_K$.

Suppose $K$, $m$ and $\ell$ are positive integers such that $2\leq
m\leq K-1$ and $\ell\leq K-m+1$. We use $\mathbb Z_K$ to denote
the set of $K$ users, and each file $\textbf{W}_n$ is divided into
$F=|\mathcal B_{K,\ell}(m)|$ packets. $($Note that $\mathcal
B_{K,\ell}(m)$ is the collection of all $(m)_{\ell}$-bounded
subsets of $\mathbb Z_K$.$)$ Then we can denote
\begin{align}\label{CCS1-FP}
\textbf{W}_n=\{W_{n,S}\in\mathbb F_2: S\in\mathcal
B_{K,\ell}(m)\}.\end{align} Moreover, for each $T\in\mathcal
B_{K,\ell}(m-1)$, denote
\begin{align}\label{CCS1-UT}
\mathcal U(T)=\{k\in\mathbb Z_K: (T\cup\{k\})\in\mathcal
B_{K,\ell}(m)\},\end{align} and for each $k\in\mathbb Z_K$, denote
\begin{align}\label{CCS1-Vk} \mathcal V(k)=\{T\in\mathcal
B_{K,\ell}(m-1): (T\cup\{k\})\in\mathcal
B_{K,\ell}(m)\}.\end{align}

Now, we have the following construction.

\textbf{Construction 1}: A coded caching scheme is as follows.
\begin{itemize}
 \item[(i)] (Placement Scheme) For each $k\in\mathbb Z_K$,
 the user $k$ caches
 \begin{align}\label{CCS1-PS}\textbf{Z}_k\!=\!\{W_{n,S}: n\!\in\![N],
 S\!\in\!\mathcal B_{K,\ell}(m)~\text{and}~k\!\notin\! S\}.\end{align}
 \item[(ii)] (Delivery Scheme) Given any
 $\textbf{d}=(d_0,d_1,\cdots,d_{K-1})\in[N]^K$, for each $T\in\mathcal
 B_{K,\ell}(m-1)$, the server transmits
 \begin{align}\label{CCS1-DS}
 X_T=\oplus_{k\in \mathcal U(T)}W_{d_k,T\cup\{k\}},\end{align} where
 $\oplus$ denotes the bitwise XOR. %Then
 %$$\textbf{X}_{\textbf{d}}=\{X_T: T\in\mathcal B_{K,\ell}(m-1)\}.$$
 \item[(iii)] (Decoding Scheme) Given any
 $\textbf{d}=(d_0,d_1,\cdots,d_{K-1})\in[N]^K$, for each $k\in\mathbb Z_K$
 and each $T\in \mathcal V(k)$,
 \begin{align}\label{CCS1-DC}W_{d_k,T\cup\{k\}}=\left(\oplus_{k'\in
 \mathcal U(T)\backslash\{k\}}W_{d_{k'},T\cup\{k'\}}\right)\oplus
 X_T.\end{align}
\end{itemize}

Clearly, the decoding equality \eqref{CCS1-DC} can be derived
directly from \eqref{CCS1-DS}. We still have to prove that each
user can recover its requested file by the decoding scheme.
%This is done by the following lemma.
\begin{lem}\label{lem-CCS1-DC}
In Construction 1, for each $k\in\mathbb Z_K$, the user $k$ can
successfully recover its requested file $\textbf{W}_{d_k}$.
\end{lem}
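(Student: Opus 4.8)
The plan is to show that for every user $k$ and every requested packet $W_{d_k,S}$ with $k\in S$, user $k$ can obtain $W_{d_k,S}$ from its cache and the broadcast transmissions. First I would split the file $\textbf{W}_{d_k}$ into the part already cached and the part that must be decoded: by \eqref{CCS1-PS}, user $k$ already holds all $W_{d_k,S}$ with $S\in\mathcal B_{K,\ell}(m)$ and $k\notin S$, so it remains to recover $W_{d_k,S}$ for every $S\in\mathcal B_{K,\ell}(m)$ with $k\in S$. The key observation is that every such $S$ can be written as $S=T\cup\{k\}$ where $T=S\backslash\{k\}$; by Lemma~\ref{Subset-BS}, $T$ is an $(m-1)_\ell$-bounded subset of $\mathbb Z_K$, i.e.\ $T\in\mathcal B_{K,\ell}(m-1)$, and moreover $T\cup\{k\}=S\in\mathcal B_{K,\ell}(m)$, so by \eqref{CCS1-Vk} we have $T\in\mathcal V(k)$. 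Hence the packets user $k$ still needs are exactly $\{W_{d_k,T\cup\{k\}}: T\in\mathcal V(k)\}$, and the decoding scheme provides one equation \eqref{CCS1-DC} for each such $T$.

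Next I would verify that the right-hand side of \eqref{CCS1-DC} is computable by user $k$. For a fixed $T\in\mathcal V(k)$, the term $X_T$ is broadcast by the server, so it is available. The remaining terms are $W_{d_{k'},T\cup\{k'\}}$ for $k'\in\mathcal U(T)\backslash\{k\}$; for each such $k'$ we have $k'\neq k$ and $T\cup\{k'\}\in\mathcal B_{K,\ell}(m)$ with $k\notin T\cup\{k'\}$ — the latter because $k\notin T$ (as $T\cup\{k\}\in\binom{\mathbb Z_K}{m}$ forces $k\notin T$) and $k\neq k'$. Therefore, by the placement rule \eqref{CCS1-PS}, $W_{d_{k'},T\cup\{k'\}}\in\textbf{Z}_k$ for every $k'\in\mathcal U(T)\backslash\{k\}$, so user $k$ can form the XOR on the right of \eqref{CCS1-DC} and thus obtain $W_{d_k,T\cup\{k\}}$.

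Finally I would assemble the two halves: the cached packets $\{W_{d_k,S}: k\notin S\}$ together with the decoded packets $\{W_{d_k,T\cup\{k\}}: T\in\mathcal V(k)\}$ exhaust $\{W_{d_k,S}: S\in\mathcal B_{K,\ell}(m)\}=\textbf{W}_{d_k}$, using the bijection $S\leftrightarrow T=S\backslash\{k\}$ between $m$-subsets of $\mathbb Z_K$ containing $k$ that lie in $\mathcal B_{K,\ell}(m)$ and elements of $\mathcal V(k)$. Since the decoding equation \eqref{CCS1-DC} is a direct rearrangement of the transmitted XOR \eqref{CCS1-DS} (valid because XOR over $\mathbb F_2$ is invertible), correctness follows. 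I expect the only slightly delicate point to be the bookkeeping that $k\in\mathcal U(T)$ precisely when $T\in\mathcal V(k)$ and that every needed index set $T\cup\{k'\}$ with $k'\in\mathcal U(T)$ is indeed an element of $\mathcal B_{K,\ell}(m)$ so that the packet $W_{d_{k'},T\cup\{k'\}}$ is well defined; this is immediate from \eqref{CCS1-UT}, \eqref{CCS1-Vk}, and Lemma~\ref{Subset-BS}, so no real obstacle remains beyond careful indexing.
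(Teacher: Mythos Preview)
Your proposal is correct and follows essentially the same argument as the paper: reduce to recovering $W_{d_k,S}$ for $S\in\mathcal B_{K,\ell}(m)$ with $k\in S$, set $T=S\backslash\{k\}$, invoke Lemma~\ref{Subset-BS} to get $T\in\mathcal B_{K,\ell}(m-1)$ (hence $T\in\mathcal V(k)$ and $k\in\mathcal U(T)$), and then observe that every other summand $W_{d_{k'},T\cup\{k'\}}$ in \eqref{CCS1-DC} is cached by user $k$ because $k\notin T\cup\{k'\}$ and $T\cup\{k'\}\in\mathcal B_{K,\ell}(m)$ by \eqref{CCS1-UT}. Your write-up is slightly more explicit about the bijection $S\leftrightarrow T$ and the well-definedness of the packet indices, but there is no substantive difference.
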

\begin{proof}
By \eqref{CCS1-FP} and \eqref{CCS1-PS}, it suffices to prove that
for each $k\!\in\!\mathbb Z_K$ and $S\!\in\!\mathcal
B_{K,\ell}(m)$ such that $k\!\in\! S$, the user $k$ can recover
$W_{d_k,S}$ from its cached packets and received packets.

Let $T=S\backslash\{k\}$. By Lemma \ref{Subset-BS}, we have
$T\in\mathcal B_{K,\ell}(m-1)$, $T\in \mathcal V(k)$ and $k\in
\mathcal U(T)$, where $\mathcal V(k)$ and $\mathcal U(T)$ are
defined as in \eqref{CCS1-Vk} and \eqref{CCS1-UT}, respectively.
For each $k'\in \mathcal U(T)\backslash\{k\}$, since
$T=S\backslash\{k\}$, we have $k\notin T\cup\{k'\}$. Moreover, by
\eqref{CCS1-UT}, we have $T\cup\{k'\}\in\mathcal B_{K,\ell}(m)$.
Then by \eqref{CCS1-PS}, the user $k$ caches
$W_{d_{k'},T\cup\{k'\}}$ for each $k'\in \mathcal
U(T)\backslash\{k\}$, and hence it can recover
$W_{d_k,T\cup\{k\}}=W_{d_k,S}$ by \eqref{CCS1-DC}.
\end{proof}

\begin{thm}\label{thm-main1}
Construction 1 gives a coded caching scheme for any $(K,M,N)$
caching system with $\frac{M}{N}=1-\frac{m}{K},$
\begin{align*}
F=\frac{K}{m}\sum_{i=1}^m(-1)^{i-1}\binom{m}{i}\binom{K-1-i(\ell-1)}{m-1},
\end{align*}
and
$$R=\frac{m}{m-1}\frac{\sum_{i=1}^{m-1}(-1)^{i-1}\binom{m-1}{i}
\binom{K-1-i(\ell-1)}{m-2}}
{\sum_{i=1}^m(-1)^{i-1}\binom{m}{i}\binom{K-1-i(\ell-1)}{m-1}}.$$
Moreover, denoting $n=K-m+1-\ell$, then
\begin{align*}
F\leq K\binom{\left(1-\frac{M}{N}\right)K+n}{n}.
\end{align*}
\end{thm}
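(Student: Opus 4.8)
The plan is to read off all four assertions from the structural facts already proved, so the proof is mostly bookkeeping. First, that Construction~1 defines a genuine $F$-division caching scheme --- i.e. that the decoding functions implicit in \eqref{CCS1-DC} reconstruct the demands --- is exactly Lemma~\ref{lem-CCS1-DC}. For the memory ratio, observe from \eqref{CCS1-PS} that user $k$ stores, for each of the $N$ files, precisely the packets $W_{n,S}$ with $k\notin S$, and by part~1) of Lemma~\ref{P-in-BS} the number of $S\in\mathcal B_{K,\ell}(m)$ with $k\in S$ is $C(K,m,\ell)$, the same for every $k$. Hence $\textbf{Z}_k$ occupies $N\bigl(F-C(K,m,\ell)\bigr)$ bits, while a file occupies $F$ bits, so $\frac{M}{N}=\frac{F-C(K,m,\ell)}{F}$. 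Part~2) of Lemma~\ref{P-in-BS} gives $F=|\mathcal B_{K,\ell}(m)|=\frac{K}{m}C(K,m,\ell)$, whence $\frac{C(K,m,\ell)}{F}=\frac{m}{K}$ and $\frac{M}{N}=1-\frac{m}{K}$; in particular $\bigl(1-\frac{M}{N}\bigr)K=m$.

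The subpacketization is immediate: $F=|\mathcal B_{K,\ell}(m)|$ by the choice made in Construction~1, and the closed form is part~2) of Lemma~\ref{P-in-BS}. For the rate, the delivery rule \eqref{CCS1-DS} transmits exactly one bit $X_T$ for each $T\in\mathcal B_{K,\ell}(m-1)$, and this number does not depend on the demand vector $\textbf{d}$, so $R=\dfrac{|\mathcal B_{K,\ell}(m-1)|}{|\mathcal B_{K,\ell}(m)|}$. Since $m\ge 2$ and $\ell\le K-m+1\le K-(m-1)+1$, the hypotheses of Lemma~\ref{P-in-BS} are met with $m-1$ in place of $m$, so $|\mathcal B_{K,\ell}(m-1)|=\frac{K}{m-1}\sum_{i=1}^{m-1}(-1)^{i-1}\binom{m-1}{i}\binom{K-1-i(\ell-1)}{m-2}$; dividing by $F=\frac{K}{m}\sum_{i=1}^{m}(-1)^{i-1}\binom{m}{i}\binom{K-1-i(\ell-1)}{m-1}$ the factor $K$ cancels and the stated formula for $R$ appears. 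If one wants $R$ to be attained rather than merely an upper bound, one should also note that $\mathcal U(T)\neq\emptyset$ for every $T\in\mathcal B_{K,\ell}(m-1)$, which follows by inserting one new element into a cyclic gap of $T$ of size $\ge\ell$ --- possible because, as in the proof of part~3) of Lemma~\ref{P-in-BS} and by Remark~\ref{rem-BNDS}, membership in $\mathcal B_{K,\ell}(\cdot)$ is governed by the existence of such a gap.

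Finally, the bound on $F$ is a substitution into part~3) of Lemma~\ref{P-in-BS}: writing $n=K-m+1-\ell$ we have $K-\ell+1=m+n$, and using $\bigl(1-\frac{M}{N}\bigr)K=m$ from the first step, $\binom{K-\ell+1}{m}=\binom{m+n}{m}=\binom{m+n}{n}=\binom{\bigl(1-\frac{M}{N}\bigr)K+n}{n}$, so $F=|\mathcal B_{K,\ell}(m)|\le K\binom{K-\ell+1}{m}=K\binom{\bigl(1-\frac{M}{N}\bigr)K+n}{n}$. I do not anticipate a real obstacle: the only places requiring care are checking that Lemma~\ref{P-in-BS} may be invoked at parameter $m-1$, verifying $\mathcal U(T)\neq\emptyset$, and keeping the two substitutions $\frac{M}{N}=1-\frac{m}{K}$ and $n=K-m+1-\ell$ consistent throughout --- everything else is a restatement of Lemmas~\ref{lem-CCS1-DC} and \ref{P-in-BS}.
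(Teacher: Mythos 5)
Your proposal is correct and follows essentially the same route as the paper's proof: correctness via Lemma~\ref{lem-CCS1-DC}, the memory ratio via $C(K,m,\ell)=\frac{m}{K}|\mathcal B_{K,\ell}(m)|$, the rate as $|\mathcal B_{K,\ell}(m-1)|/|\mathcal B_{K,\ell}(m)|$ with both counts from Lemma~\ref{P-in-BS}, and the subpacketization bound from part~3) of that lemma together with the substitution $K-\ell+1=m+n$. Your extra checks (that Lemma~\ref{P-in-BS} applies at parameter $m-1$, and that $\mathcal U(T)\neq\emptyset$ so the rate is attained) are sensible refinements the paper leaves implicit.
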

\begin{proof}
By Lemma \ref{lem-CCS1-DC}, Construction 1 is a coded caching
scheme for any $(K,M,N)$ caching system with $K$ users and $N$
files, and we have seen that each file is divided into
$F=|\mathcal B_{K,\ell}(m)|
=\frac{K}{m}\sum_{i=1}^m(-1)^{i-1}\binom{m}{i}\binom{K-1-i(\ell-1)}{m-1}$
packets. %It remains to compute $\frac{M}{N}$ and $R$.

For each $k\in\mathbb Z_K$, by \eqref{CCS1-PS}, each user caches
$|\mathcal B_{K,\ell}(m)|-C(K,m,\ell)$ packets of each file, where
$C(K,m,\ell)$ is the number of $(m)_{\ell}$-bounded subsets of
$\mathbb Z_K$ containing $k$. In the proof of 2) of Lemma
\ref{P-in-BS}, we have seen that $C(K,m,\ell)=\frac{|\mathcal
B_{K,\ell}(m)|m}{K}$, so we can obtain
\begin{align*}\frac{M}{N}&
=\frac{|\mathcal B_{K,\ell}(m)|-C(K,m,\ell)}{F}\\
&=\frac{|\mathcal B_{K,\ell}(m)|-\frac{|\mathcal
B_{K,\ell}(m)|m}{K}}{|\mathcal B_{K,\ell}(m)|}\\
&=1-\frac{m}{K}.\end{align*}

By the delivery scheme of Construction 1, the total number of
packets transmitted by the server is $RF\!=\!|\mathcal
B_{K,\ell}(m\!-\!1)|$, so
\begin{align*}R&=\frac{|\mathcal B_{K,\ell}(m-1)|}{F}\\
&=\frac{\frac{K}{m-1}\sum_{i=1}^{m-1}(-1)^{i-1}\binom{m-1}{i}
\binom{K-1-i(\ell-1)}{m-2}}
{\frac{K}{m}\sum_{i=1}^m(-1)^{i-1}\binom{m}{i}\binom{K-1-i(\ell-1)}{m-1}}\\
&=\frac{m}{m-1}\frac{\sum_{i=1}^{m-1}(-1)^{i-1}\binom{m-1}{i}
\binom{K-1-i(\ell-1)}{m-2}}
{\sum_{i=1}^m(-1)^{i-1}\binom{m}{i}\binom{K-1-i(\ell-1)}{m-1}}.
\end{align*}

Moreover, noticing that $\frac{M}{N}=1-\frac{m}{K}$, we can obtain
%\begin{align*}%\label{thm-tlc-m}
$m=K\left(1-\frac{M}{N}\right)$. %\end{align*}
Since $n=K-m+1-\ell$, then
$K-\ell+1=m+n=\left(1-\frac{M}{N}\right)K+n$. So by 3) of Lemma
\ref{P-in-BS}, we have
\begin{align*}|\mathcal B_{K,\ell}(m)|&\leq
K\binom{K-\ell+1}{m}\\&=K\binom{K-\ell+1}{K-\ell+1-m}\\
&=K\binom{\left(1-\frac{M}{N}\right)K+n}{n},\end{align*} which
completes the proof.
\end{proof}

We can compare our construction with the Maddah-Ali-Niesen scheme
\cite{Maddah-Ali14}. For any $t\in[K-2]$ and any integer $n$ such
that $0\leq n\leq t$, let $m=K-t$ and $\ell=K-m+1-n$. Then from
Construction 1, we obtain a coded caching scheme for any $(K,M,N)$
caching system with $\frac{M}{N}=1-\frac{m}{K}=\frac{t}{K}$ and
$F\leq K\binom{\left(1-\frac{M}{N}\right)K+n}{n}$. Moreover, we
have
\begin{itemize}
 \item[1)] For $n>t-\frac{K}{K-t}$, we have
 $\ell<\frac{K}{m}+1$, and by 2) of Remark \ref{rem-BNDS}, $\mathcal
 B_{K,\ell}(m)=\binom{\mathbb Z_K}{m}$ and $\mathcal
 B_{K,\ell}(m-1)=\binom{\mathbb Z_K}{m-1}$. By Theorem \ref{thm-main1},
 it can be verified that the caching scheme
 obtained from Construction 1 has $F=\binom{K}{KM/N}$ and
 $R=\frac{K(1-M/N)}{1+KM/N}$, which are the same as the Maddah-Ali-Niesen
 scheme \cite{Maddah-Ali14}.
 \item[2)] As $n$ decreases, $\ell$ increases and
 by Theorem \ref{thm-main1}, $F$ decreases while $R$ increases. As an
 example, the $\log (F)$ versus $n+1$ and the $R$
 versus $n+1$ for a system with $K=50$ and
 $\frac{M}{N}=\frac{1}{2}$
 are shown in Fig. \ref{fg-RF-n}.
\end{itemize}
%%%%%%%%%%%%%%%%%%%%%%%%%%%%%%%%%%%%%%%%%%%%%
\renewcommand\figurename{Fig}
\begin{figure}[htbp]
\begin{center}
\includegraphics[height=3.5cm]{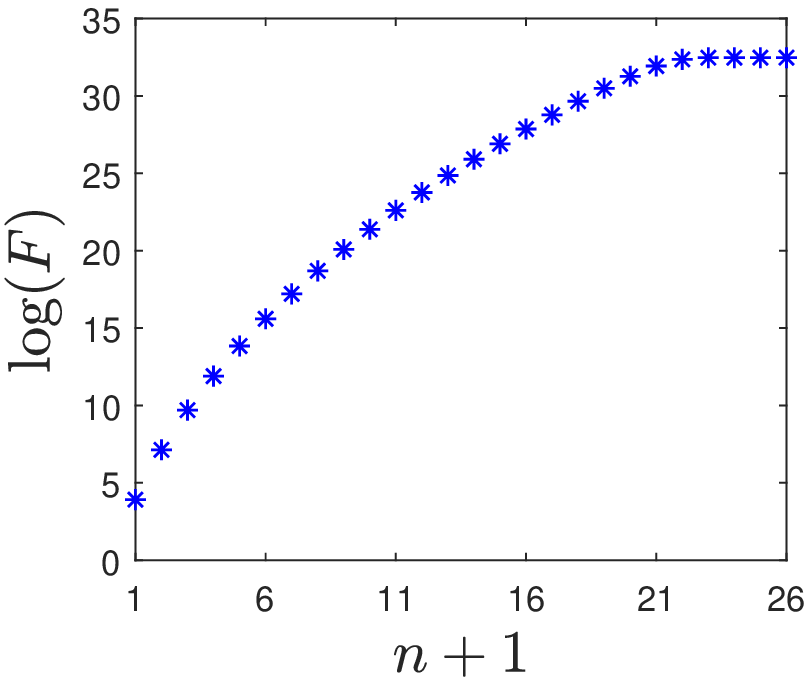}
\includegraphics[height=3.5cm]{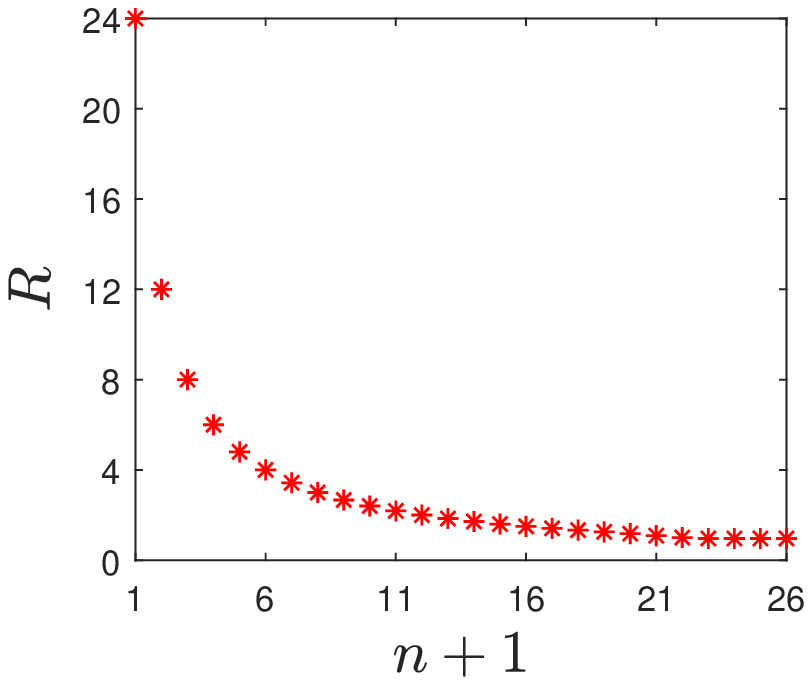}
\end{center}
\vspace{-0.2cm}\caption{The $\log (F)$ versus $n+1$ and the $R$
versus $n+1$ figure for a caching system with $K=50$ and
$\frac{M}{N}=\frac{1}{2}$, where we can obtain $1\leq n+1\leq
26$.}\label{fg-RF-n}
\end{figure}
%%%%%%%%%%%%%%%%%%%%%%%%%%%%%%%%%%%%%%%%%%%%%%

Construction 1 gives a family of caching schemes with polynomial
subpacketization, as stated by the following theorem.
\begin{thm}\label{thm-main2}
Given an integer $n\geq 0$, for any $K$ such that
$m=K\left(1-\frac{M}{N}\right)$ is an integer and $2\leq m\leq
K-n$, there exists a coded caching scheme for any $(K,M,N)$
caching system with $F\leq
K\binom{\left(1-\frac{M}{N}\right)K+n}{n}=O(K^{n+1})$ and
$R=\frac{m}{m-1}\frac{\sum_{i=1}^{m-1}(-1)^{i-1}\binom{m-1}{i}
\binom{K-1-i(\ell-1)}{m-2}}
{\sum_{i=1}^m(-1)^{i-1}\binom{m}{i}\binom{K-1-i(\ell-1)}{m-1}},$
where $\ell=K-m+1-n$.
\end{thm}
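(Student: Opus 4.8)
The plan is to derive Theorem \ref{thm-main2} as an essentially immediate corollary of Theorem \ref{thm-main1}, so the proof is mostly a matter of checking that the hypotheses of the two theorems match up and that the claimed asymptotic bound $O(K^{n+1})$ is legitimate. First I would fix the integer $n\geq 0$ and a value of $K$ with $m:=K\left(1-\frac{M}{N}\right)$ an integer and $2\leq m\leq K-n$, and set $\ell:=K-m+1-n$. The first routine check is that $(K,m,\ell)$ satisfies the standing hypotheses needed for Construction 1 and Theorem \ref{thm-main1}, namely $K\geq 2$, $2\leq m\leq K-1$, and $\ell\leq K-m+1$. The bound $m\leq K-n$ with $n\geq 0$ gives $m\leq K$, and combined with $m\geq 2$ we get $m\leq K-1$ unless $m=K$; but $m=K$ would force $n\leq 0$, i.e. $n=0$, and then $\ell=K-m+1=1$, which one should check is still admissible (indeed $\ell=1$ just means $V_{K,\ell}(m)=V_K(m)$, so $\mathcal B_{K,\ell}(m)=\binom{\mathbb Z_K}{m}$ and Construction 1 reduces to the Maddah-Ali-Niesen scheme). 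More cleanly, since $n\geq 0$ we have $\ell=K-m+1-n\leq K-m+1$, so the condition $\ell\leq K-m+1$ holds automatically, and $\ell\geq 1$ is exactly the requirement $n\leq K-m$, which is our hypothesis. So the parameter triple is admissible.

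Next I would invoke Theorem \ref{thm-main1} directly with this choice of $(K,m,\ell)$. Observe that the relation $n=K-m+1-\ell$ in the statement of Theorem \ref{thm-main1} is precisely our definition of $\ell$ rearranged, and that $\frac{M}{N}=1-\frac{m}{K}$ in Theorem \ref{thm-main1} matches our defining relation $m=K\left(1-\frac{M}{N}\right)$. Hence Theorem \ref{thm-main1} supplies a coded caching scheme for the $(K,M,N)$ system whose rate $R$ is exactly the expression stated in Theorem \ref{thm-main2}, and whose subpacketization satisfies
\begin{align*}
F\leq K\binom{\left(1-\tfrac{M}{N}\right)K+n}{n}.
\end{align*}
This takes care of the explicit bound; all that remains is the asymptotic claim.

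Finally I would verify that $K\binom{\left(1-\frac{M}{N}\right)K+n}{n}=O(K^{n+1})$ for fixed $M$, $N$, $n$ as $K\to\infty$. Writing $c=1-\frac{M}{N}$ (a fixed rational in $(0,1]$), the binomial coefficient $\binom{cK+n}{n}=\frac{(cK+n)(cK+n-1)\cdots(cK+1)}{n!}$ is a polynomial in $K$ of degree exactly $n$ with leading coefficient $c^n/n!$, so $K\binom{cK+n}{n}$ is a polynomial in $K$ of degree $n+1$; in particular it is $O(K^{n+1})$. I would state this computation in one or two lines rather than belabouring it. I do not anticipate a genuine obstacle here: the only place requiring a moment's care is the boundary case $m=K$ (forcing $n=0$, $\ell=1$), where one must confirm that Construction 1 and Theorem \ref{thm-main1} still apply — but since $\ell\leq K-m+1$ is the only constraint on $\ell$ and it holds with equality, and $\binom{K}{K}=1$ makes the $F$ bound trivially $F\leq K$, this case is fine and can be dispatched with a sentence.
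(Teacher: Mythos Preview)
Your proposal is correct and follows essentially the same approach as the paper: verify that the parameter triple $(K,m,\ell)$ with $\ell=K-m+1-n$ satisfies the standing hypotheses of Construction 1 / Theorem \ref{thm-main1}, then invoke Theorem \ref{thm-main1} directly. If anything, you are more careful than the paper, which simply asserts ``by assumption, $2\leq m\leq K-1$ and $1\leq\ell\leq K-m+1$'' without addressing the boundary case $n=0$, $m=K$, and which does not spell out the one-line verification that $K\binom{cK+n}{n}$ is a degree-$(n+1)$ polynomial in $K$.
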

\vspace{3pt}\begin{proof} By assumption, we have $2\leq\! m\leq\!
K-1$
and $1\leq\!\ell\leq\! K\!-m+1$. Therefore, %by Theorem \ref{thm-main2},
Construction 1 gives a coded caching scheme for any $(K,M,N)$
caching system with $F\leq
K\binom{\left(1-\frac{M}{N}\right)K+n}{n}$ and
$R=\frac{m}{m-1}\frac{\sum_{i=1}^{m-1}(-1)^{i-1}\binom{m-1}{i}
\binom{K-1-i(\ell-1)}{m-2}}
{\sum_{i=1}^m(-1)^{i-1}\binom{m}{i}\binom{K-1-i(\ell-1)}{m-1}}$.
\end{proof}

\section{Conclusions}
We construct a family of coded caching schemes, which includes the
schemes with optimal rate as well as the schemes with polynomial
subpacketization. Like all existing constructions, our method
reduces the subpacketization at the cost of increasing the rate.
It is still an open problem to characterize the tight bound on the
rate for coded caching with polynomial subpacketization.


\begin{thebibliography}{9}

\bibitem{Maddah-Ali14}
M. A. Maddah-Ali and U. Niesen, ``Fundamental limits of caching,''
\emph{IEEE Trans. Inf. Theory}, vol. 60, no. 5, pp. 2856-2867, May
2014.

\bibitem{Wan16}
K. Wan, D. Tuninetti, and P. Piantanida, ``On the optimality of
uncoded cache placement,'' in \emph{Proc. IEEE Inf. Theory
Workshop (ITW)}, 2016, pp. 161-165.

\bibitem{Yu17}
Q. Yu, M. A. Maddah-Ali, and A. S. Avestimehr, ``The exact
ratememory tradeoff for caching with uncoded prefetching,'' in
\emph{Proc. IEEE Int. Symp. Inf. Theory (ISIT)}, Jun. 2017, pp.
1613-1617.

\bibitem{Shanmugam16}
K. Shanmugam, M. Ji, A. M. Tulino, J. Llorca, and A. G. Dimakis,
``Finite-length analysis of caching-aided coded multicasting,''
\emph{IEEE Trans. Inf. Theory}, vol. 62, no. 10, pp. 5524-5537,
Oct 2016.

%\bibitem{Cheng17}
%M. Cheng, Q. Yan, X. Tang, and J. Jiang, ``Coded Caching Schemes
%with Low Rate and Subpacketizations,'' 2017, available online at
%\emph{https://arxiv.org/abs/1703.01548}.


%\bibitem{Shanmugam17-2}
%K. Shanmugam, A. G. Dimakis, J. Llorca and A. M. Tulino, ``A
%unified Ruzsa-Szemer$\acute{\text{e}}$di framework for
%finite-length coded caching, in \emph{Proc. 51st Asilomar
%Conference on Signals, Systems, and Computers}, Pacific Grove, CA,
%2017, pp. 631-635.

\bibitem{Cheng19-1}
M. Cheng, J. Jiang, Q. Wang, and Y. Yao, ``A Generalized Grouping
Scheme in Coded Caching,'' \emph{IEEE Trans. Communications}, vol.
67, no. 5, pp. 3422-3430, May 2019.

\bibitem{Yan17}
Q. Yan, M. Cheng, X. Tang, and Q. Chen, ``On the placement
delivery array design for centralized coded caching scheme,''
\emph{IEEE Trans. Inf. Theory}, vol. 63, no. 9, pp. 5821-5833,
Sep. 2017.

\bibitem{Cheng19}
M. Cheng, J. Jiang, Q. Yan, and X. Tang, ``Constructions of Coded
Caching Schemes With Flexible Memory Size,'' \emph{IEEE Trans.
Communications}, vol. 67, no. 6, pp. 4166-4176, Jun. 2019.

\bibitem{Chong18}
C. Shangguan, Y. Zhang, and G. Ge, ``Centralized Coded Caching
Schemes: A Hypergraph Theoretical Approach,'' \emph{IEEE Trans.
Inf. Theory}, vol. 64, no. 8, pp. 5755-5766, Aug. 2018.

\bibitem{Yan18}
Q. Yan, X. Tang, Q. Chen, and M. Cheng, ``Placement delivery array
design through strong edge coloring of bipartite graphs,,,
\emph{IEEE Communications Letters}, vol. 22, no. 2, pp. 236-239,
Feb 2018.

\bibitem{Michel19}
J. Michel and Q. Wang, ``Placement Delivery Arrays from
Combinations of Strong Edge Colorings,''  2019, available online
at \emph{https://arxiv.org/abs/1907.03177}.

\bibitem{Tang18}
L. Tang and A. Ramamoorthy, ``Coded Caching Schemes With Reduced
Subpacketization From Linear Block Codes,'' \emph{IEEE Trans. Inf.
Theory}, vol. 64, no. 4, pp. 3099-3120, Apr 2018.

\bibitem{Krishnan18}
P. Krishnan, ``Coded caching via line graphs of bipartite
graphs,'' in \emph{Proc. IEEE Information Theory Workshop (ITW)},
2018, pp. 1-5.

\bibitem{Suthan19}
C. Hari Hara Suthan, M. Bhavana, and P. Krishnan, ``Coded caching
via projective geometry: A new low subpacketization scheme,'' in
\emph{Proc. IEEE Int. Symp. Inform. Theory (ISIT)}, 2019, pp.
682-686.

\bibitem{Agrawal19}
S. Agrawal, K. V. S. Sree, and P. Krishnan, ``Coded Caching based
on Combinatorial Designs,'' in \emph{Proc. IEEE Int. Symp. Inform.
Theory (ISIT)}, 2019, pp. 1227-1231.

\bibitem{Cheng19-2}
M. Cheng, J. Wang, and X. Zhong, ``A Unified Framework for
Constructing Centralized Coded Caching Schemes,'' 2019, available
online at \emph{https://arxiv.org/abs/1908.05865}.

\bibitem{Song19}
W. Song, K. Cai, and L. Shi, ``Some New Constructions of Coded
Caching Schemes with Reduced Subpacketization,'' 2019, available
online at \emph{https://arxiv.org/abs/1908.06570}.

\bibitem{Shanmugam17}
K. Shanmugam, A. M. Tulino, and A. G. Dimakis, ``Coded caching
with linear subpacketization is possible using
Ruzsa-Szemer$\acute{\text{e}}$di graphs,'' in \emph{Proc. IEEE
Int. Symp. Inform. Theory (ISIT)}, 2017, pp. 1237-1241.

%\bibitem{Liron15}
%Y. Liron and M. Langberg, ``A Characterization of the Number of
%Subsequences Obtained via the Deletion Channel,'' \emph{IEEE
%Trans. Inf. Theory}, vol. 61, no. 5, pp. 2300-2312, May 2015.

\bibitem{Ratsaby}
J. Ratsaby, ``Estimate of the number of restricted
integer-partitions,'' \emph{Applicable Analysis and Discrete
Mathematics}, 2(2): 222-233, 2008.

\bibitem{Jukna}
S. Jukna. \emph{Extremal Combinatorics: With Applications in
Computer Science}. EATCS Texts in Theoretical Computer Science.
Springer-Verlag, 2001.

\end{thebibliography}
\end{document}